\newtheoremstyle{plainstyle}
  {3pt}   
  {3pt}   
  {\itshape}  
  {}          
  {\bfseries} 
  {.}         
  {.5em}      
  {}          
\newtheoremstyle{proofstyle}
  {3pt}   
  {3pt}   
  {\normalfont} 
  {}          
  {\itshape}  
  {.}         
  {.5em}      
  {}          
\theoremstyle{plainstyle}
\newtheorem{theorem}{Theorem}[section]
\newtheorem{corollary}[theorem]{Corollary}
\theoremstyle{definition}
\newtheorem{definition}{Definition}[section]
\renewenvironment{abstract}
 {\small
  \begin{center}
  \bfseries \scshape Abstract. \vspace{-.5em}\vspace{0pt}
  \end{center}
  \list{}{
    \setlength{\leftmargin}{0.5cm}%
    \setlength{\rightmargin}{0.5cm}%
  }%
  \item\relax}
 {\endlist}
\title{\bfseries \scshape Refinements of Jensen's Inequality for Twice-Differentiable Convex Functions with Bounded Hessian}
\author{\textsc{Sambhab Mishra}}
\date{}
\begin{document}

\maketitle

\begin{abstract}
Jensen's inequality, attributed to Johan Jensen - a Danish mathematician and engineer noted for his contributions to the theory of functions - is a ubiquitous result in convex analysis, providing a fundamental lower bound for the expectation of a convex function. In this paper, we establish rigorous refinements of this inequality specifically for twice-differentiable functions with bounded Hessians. By utilizing Taylor expansions with integral remainders, we tried to bridge the gap between classical variance-based bounds and higher-precision estimates. We also discover explicit error terms governed by Grüss-type inequalities, allowing for the incorporation of skewness and kurtosis into the bound. Using these new theoretical tools, we improve upon existing estimates for the Shannon entropy of continuous distributions and the ergodic capacity of Rayleigh fading channels, demonstrating the practical efficacy of our refinements.

\end{abstract}
\large
\section{Introduction}

The inequality first proposed by Johan Jensen in 1906 has permeated nearly every branch of modern mathematics. In its probabilistic form, Jensen's inequality states that for a real-valued convex function $\phi$ defined on an interval $I \subseteq \mathbb{R}$ and a random variable $X$ taking values in $I$ with finite expectation $\mathbb{E}[X]$, the image of the expectation is bounded above by the expectation of the image:
\begin{equation}
\phi(\mathbb{E}[X]) \le \mathbb{E}[\phi(X)].
\end{equation}
This simple relation generalizes the geometric fact that the secant line connecting any two points on the graph of a convex function lies above the graph itself. 
The difference between the two sides of this inequality is formally defined as the Jensen gap, denoted as:
\begin{equation}
\mathcal{J}(\phi, X) = \mathbb{E}[\phi(X)] - \phi(\mathbb{E}[X]).
\end{equation}
While the non-negativity of $\mathcal{J}(\phi, X)$ is sufficient for establishing fundamental results such as the non-negativity of the Kullback-Leibler divergence or the arithmetic-geometric mean inequality, it is increasingly insufficient for modern applications requiring precise error quantification.
\begin{itemize}
    \item In variational inference, the Jensen gap represents the "evidence lower bound" (ELBO) gap that must be minimized to approximate posterior distributions accurately.
    \item In operator theory, the gap quantifies the non-commutativity effects when applying convex functions to self-adjoint operators.
    \item In actuarial science, the gap represents the risk premium inherent in non-linear utility functions.
\end{itemize}

\subsection{The Need for Refinement}
The primary limitation of the classical inequality is its insensitivity to the distribution of the random variable $X$ beyond its mean, and its insensitivity to the local curvature of $\phi$. For a linear function, the gap is zero. As the curvature (second derivative) increases, or as the dispersion of $X$ increases, the gap widens.

Classical refinements often rely on crude global bounds of the second derivative or simple variance terms. For instance, if $\phi$ is twice differentiable and $m \le \phi''(x) \le M$, it is well known that the gap is bounded by terms proportional to the variance $\sigma^2$ scaled by $m/2$ and $M/2$. However, these first-order refinements fail to capture the behavior of the gap when:
\begin{itemize}
    \item \textbf{The function is highly non-quadratic:} For functions like the logarithmic barrier - $\log x$ or the exponential $e^x$, the second derivative varies by orders of magnitude over the domain.
    \item \textbf{A global upper bound $M$ might be infinite or excessively loose}, rendering the upper bound on the gap useless.
    \item \textbf{The distribution is asymmetric:} Variance is a symmetric measure of dispersion. If $X$ is highly skewed (e.g., a Log-Normal or Pareto distribution), the mass of the probability density interacts with the changing curvature of $\phi$ in complex ways that a single variance term cannot capture.
    \item \textbf{High-precision control is required:} In bounded domains, such as channel capacity estimation for bounded transmit power, the error margins provided by standard inequalities are often too wide to be useful for system design.
\end{itemize}

\subsection{Analytical Framework}
This paper addresses these limitations by developing a systematic framework for refining Jensen's inequality using higher-order analytical tools. Our methodology rests on three pillars:
\begin{enumerate}
    \item \textbf{Integral Remainder Analysis:} We move beyond the Lagrange remainder form of Taylor's theorem, utilizing the integral form $R_1(x) = \int_{\mu}^x (x-t)\phi''(t)dt$. This representation allows us to apply powerful tools from integral inequality theory, specifically the Grüss and Chebysev inequalities, to bound the covariance between the integration kernel and the second derivative.
    \item \textbf{Moment Expansions:} We bring together recent results regarding fourth-order expansions. By carrying the Taylor series to the fourth degree, we explicitly introduce the third standardized moment (skewness, $\gamma_3$) and fourth standardized moment (kurtosis, $\gamma_4$) into the bounds. This provides a "corrected" inequality that adjusts for the shape of the distribution.
    \item \textbf{Application-Specific Optimization:} We investigate "Jensen-like" inequalities where the point of tangency is optimized. Instead of expanding around the mean $\mu$, we consider expansions around a generalized point $c$ that minimizes the error for a specific distribution class.
\end{enumerate}

The analysis provided herein is rigorous and self-contained. We provide detailed proofs or derivation for the main theorems, ensuring that the logic flows from first principles to advanced applications. The resulting bounds are then applied to critical problems in information theory (entropy estimation) and communications engineering (fading channel capacity).

\section{Mathematical Preliminaries and Definitions}
First, we must first define the classes of functions and the probabilistic setting under consideration. We also review the fundamental inequalities that will serve as our primary analytical tools.

\subsection{Convexity and Generalized Convexity}
We assume throughout that $I$ is an interval in $\mathbb{R}$ and $X$ is a random variable taking values in $I$ with probability measure $P$.

\begin{definition}[Convex Function]
A function $\phi: I \to \mathbb{R}$ is convex if for all $x, y \in I$ and $\lambda \in [0, 1]$, the following inequality holds:
\[ \phi(\lambda x + (1-\lambda)y) \le \lambda \phi(x) + (1-\lambda)\phi(y). \]
If $\phi$ is twice differentiable on $I$, convexity is equivalent to the condition $\phi''(x) \ge 0$ for all $x \in I$. Strict convexity holds if $\phi''(x) > 0$.
\end{definition}

\begin{definition}[Strongly Convex Function]
A function $\phi$ is said to be strongly convex with parameter $m > 0$ if the function $\psi(x) = \phi(x) - \frac{m}{2}x^2$ is convex. For twice-differentiable functions, this implies a global lower bound on the Hessian: $\phi''(x) \ge m$ for all $x \in I$. Strong convexity is a crucial property for establishing lower bounds on the Jensen gap.
\end{definition}

\begin{definition}[$(m, M)$-Convexity]
We extend the notion of strong convexity to include an upper bound. A function $\phi$ is $(m, M)$-convex if:
\[ m \le \phi''(x) \le M \quad \forall x \in I. \]
This condition implies that $\phi(x) - \frac{m}{2}x^2$ is convex and $\frac{M}{2}x^2 - \phi(x)$ is convex. This class of functions allows for the most precise "sandwich" bounds on the Jensen gap, effectively trapping the function between two parabolas with curvatures $m$ and $M$.
\end{definition}

\subsection{Taylor's Theorem with Integral Remainder}
Taylor’s theorem constitutes the central tool in our refinement strategy. Although polynomial approximations are conventional, the integral remainder form is indispensable for the application of functional inequalities.

\begin{theorem}[Taylor's Theorem]
Let $\phi: I \to \mathbb{R}$ be a function such that $\phi^{(n)}$ is absolutely continuous on $I$. Then for any $a, x \in I$:
\[ \phi(x) = \sum_{k=0}^n \frac{\phi^{(k)}(a)}{k!}(x-a)^k + R_n(x; a), \]
where the remainder term is given by:
\[ R_n(x; a) = \frac{1}{n!} \int_a^x (x-t)^n \phi^{(n+1)}(t) dt. \]
\end{theorem}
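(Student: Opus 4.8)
The plan is to prove the formula by induction on $n$, with the inductive step driven by a single integration by parts. The hypothesis that $\phi^{(n)}$ is absolutely continuous is exactly what licenses both the Fundamental Theorem of Calculus in the Lebesgue sense and the integration-by-parts identity, so I would open by recording its regularity consequences: since $\phi^{(n)}$ is absolutely continuous it is continuous and equal to the integral of its almost-everywhere derivative $\phi^{(n+1)}$; consequently $\phi^{(n-1)}$ is $C^1$, and descending, every lower-order derivative $\phi^{(k)}$ for $0 \le k \le n$ is continuous. This guarantees that all the boundary values appearing below are well defined, independent of whether $x > a$ or $x < a$.

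For the base case $n = 0$, the assertion reduces to
\[ \phi(x) = \phi(a) + \int_a^x \phi'(t)\,dt, \]
which is precisely the Fundamental Theorem of Calculus applied to the absolutely continuous function $\phi = \phi^{(0)}$.

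For the inductive step, I would assume the formula at order $n-1$,
\[ \phi(x) = \sum_{k=0}^{n-1} \frac{\phi^{(k)}(a)}{k!}(x-a)^k + \frac{1}{(n-1)!}\int_a^x (x-t)^{n-1}\phi^{(n)}(t)\,dt, \]
and transform the remainder $R_{n-1}$ by integration by parts. Taking $u = \phi^{(n)}(t)$ and $dv = \frac{(x-t)^{n-1}}{(n-1)!}\,dt$, so that $v = -\frac{(x-t)^n}{n!}$ and $du = \phi^{(n+1)}(t)\,dt$, the boundary term at $t = x$ vanishes because $(x-t)^n$ does, while the boundary term at $t = a$ contributes $\frac{\phi^{(n)}(a)}{n!}(x-a)^n$; the surviving integral is exactly $R_n(x;a)$. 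Substituting $R_{n-1} = \frac{\phi^{(n)}(a)}{n!}(x-a)^n + R_n$ back into the inductive hypothesis absorbs the new monomial into the sum, raising its upper index from $n-1$ to $n$ and completing the induction.

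The only genuine obstacle is justifying this integration by parts under the stated absolute-continuity hypothesis rather than the textbook $C^{n+1}$ assumption. The resolution is that the map $t \mapsto \frac{(x-t)^n}{n!}\,\phi^{(n)}(t)$ is a product of a smooth polynomial factor with the absolutely continuous factor $\phi^{(n)}$, hence is itself absolutely continuous; the Leibniz product rule then holds almost everywhere, and the integration-by-parts identity follows from the Fundamental Theorem of Calculus applied to this product. I would state this step explicitly to keep the argument self-contained, since the entire purpose of the integral remainder in the later sections is that it remains valid when $\phi^{(n+1)}$ is merely integrable rather than continuous.
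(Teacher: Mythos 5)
Your proof is correct: the induction on $n$ anchored by the Fundamental Theorem of Calculus for absolutely continuous functions, with the inductive step carried by a single integration by parts, is the standard argument, and your explicit justification that $t \mapsto \tfrac{(x-t)^n}{n!}\phi^{(n)}(t)$ is absolutely continuous (so that integration by parts is licensed under the stated hypothesis rather than a $C^{n+1}$ assumption) closes the only real gap. Note, however, that the paper offers no proof of this theorem at all --- it is stated in Section 2 as a classical preliminary and immediately specialized to $n=1$ --- so there is no in-paper argument to compare against; your write-up simply supplies the standard proof the paper takes for granted.
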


For the analysis of the Jensen gap, we are primarily interested in the case $n=1$, expanding around the mean $\mu = \mathbb{E}[X]$. The expansion becomes:
\[ \phi(X) = \phi(\mu) + \phi'(\mu)(X-\mu) + \int_\mu^X (X-t)\phi''(t) dt. \]
Taking expectations on both sides, and noting that the linear term $\mathbb{E}[\phi'(\mu)(X-\mu)]$ vanishes because $\mathbb{E}[X-\mu]=0$, we obtain an exact integral representation of the Jensen gap:
\begin{equation}
\mathcal{J}(\phi, X) = \mathbb{E} \left[ \int_\mu^X (X-t)\phi''(t) dt \right].
\end{equation}
This identity is the starting point for almost all modern refinements. The problem of bounding the gap is reduced to bounding the expectation of this integral kernel.

\subsection{The Grüss and Chebysev Inequalities}
To estimate the integral remainder, we require tools to bound the integral of a product of functions. The Grüss inequality provides a bound for the covariance of two bounded functions.

\begin{definition}[Chebysev Functional]
For two integrable functions $f, g: [a, b] \to \mathbb{R}$, the Chebysev functional is defined as:
\[ T(f, g) = \frac{1}{b-a} \int_a^b f(t)g(t) dt - \left( \frac{1}{b-a} \int_a^b f(t) dt \right) \left( \frac{1}{b-a} \int_a^b g(t) dt \right). \]
This functional measures the deviation from multiplicativity of the integral. Note that if $f$ and $g$ are random variables uniformly distributed on $[a, b]$, $T(f, g)$ is exactly their covariance.
\end{definition}

\begin{theorem}[Grüss Inequality]
Let $f, g: [a, b] \to \mathbb{R}$ be integrable functions such that there exist constants $\gamma, \Gamma, \delta, \Delta$ satisfying:
\[ \gamma \le f(t) \le \Gamma \quad \text{and} \quad \delta \le g(t) \le \Delta \quad \text{for a.e. } t \in [a, b]. \]
Then:
\[ |T(f, g)| \le \frac{1}{4} (\Gamma - \gamma)(\Delta - \delta). \]
The constant $1/4$ is sharp.
\end{theorem}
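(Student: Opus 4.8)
The plan is to reduce the inequality to a pair of one-variable variance estimates by first rewriting the Chebysev functional in a symmetric, manifestly positive-semidefinite form. The engine of the argument is the \emph{Korkine identity}, which I would establish first:
\[
T(f,g) = \frac{1}{2(b-a)^2} \int_a^b \int_a^b \bigl(f(x)-f(y)\bigr)\bigl(g(x)-g(y)\bigr)\,dx\,dy.
\]
This is verified by expanding the integrand into its four terms and integrating each over the square $[a,b]^2$: the two diagonal terms each contribute $(b-a)\int_a^b fg$, while the two cross terms each contribute $\bigl(\int_a^b f\bigr)\bigl(\int_a^b g\bigr)$, and dividing by $2(b-a)^2$ recovers the definition of $T(f,g)$ exactly. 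The payoff of this representation is that it displays $T$ as a symmetric bilinear form with $T(h,h) = \frac{1}{b-a}\int_a^b (h-\bar h)^2 \ge 0$, where $\bar h = \frac{1}{b-a}\int_a^b h$; that is, $T(h,h)$ is simply the variance of $h$ under the uniform measure on $[a,b]$.

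Next, because $T$ is a positive-semidefinite symmetric bilinear form, the Cauchy--Schwarz inequality applies and gives $|T(f,g)| \le \sqrt{T(f,f)}\,\sqrt{T(g,g)}$. It then remains to bound each self-term by a variance (Popoviciu-type) estimate. For $f$, I would use the pointwise consequence of $\gamma \le f \le \Gamma$, namely $(\Gamma - f(t))(f(t)-\gamma) \ge 0$, equivalently $f(t)^2 \le (\Gamma+\gamma)f(t) - \Gamma\gamma$. Averaging over $[a,b]$ and subtracting $\bar f^{\,2}$, then completing the square in $\bar f$, yields
\[
T(f,f) \le (\Gamma+\gamma)\bar f - \Gamma\gamma - \bar f^{\,2} = \frac{(\Gamma-\gamma)^2}{4} - \Bigl(\bar f - \tfrac{\Gamma+\gamma}{2}\Bigr)^2 \le \frac{(\Gamma-\gamma)^2}{4}.
\]
The identical computation for $g$ gives $T(g,g) \le \tfrac14(\Delta-\delta)^2$, and substituting both into the Cauchy--Schwarz bound produces precisely $|T(f,g)| \le \tfrac14(\Gamma-\gamma)(\Delta-\delta)$.

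The step I expect to be the main obstacle is securing the \emph{sharp} constant $\tfrac14$ rather than a merely correct-looking one. A naive route --- bounding the Korkine integrand crudely by $|f(x)-f(y)| \le \Gamma-\gamma$ and $|g(x)-g(y)| \le \Delta-\delta$ pointwise --- integrates to the constant $\tfrac12$, which is off by a factor of two. Recovering $\tfrac14$ is exactly what the split through Cauchy--Schwarz together with the completed-square variance bound accomplishes, so the care lies in routing the estimate through these two steps rather than bounding the double integral directly. Finally, to certify sharpness I would exhibit the equality case: taking $f$ and $g$ to be the two-valued step functions equal to $(\gamma,\delta)$ on $[a,\tfrac{a+b}{2})$ and to $(\Gamma,\Delta)$ on $[\tfrac{a+b}{2},b]$ makes every inequality above an equality, giving $T(f,g) = \tfrac14(\Gamma-\gamma)(\Delta-\delta)$ and showing the constant cannot be improved.
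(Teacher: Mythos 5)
Your proof is correct, and it is worth noting that the paper itself offers no proof of this statement at all: the Gr\"uss inequality is listed in Section 2 as a classical preliminary tool, so there is nothing in the source to compare against step by step. Your route is the standard self-contained one and every step checks out: the Korkine double-integral identity is verified correctly (the diagonal terms give $2(b-a)\int fg$ and the cross terms give $2(\int f)(\int g)$, so the normalization $\tfrac{1}{2(b-a)^2}$ recovers $T(f,g)$ exactly); the resulting positive-semidefiniteness of $T$ justifies Cauchy--Schwarz, which is precisely the paper's separately stated ``pre-Gr\"uss inequality,'' so your argument proves that intermediate theorem of the paper as a free byproduct; the Popoviciu-type bound $T(f,f)\le \tfrac14(\Gamma-\gamma)^2$ via $(\Gamma-f)(f-\gamma)\ge 0$ and completion of the square is exactly right, including the algebraic identity $\tfrac{(\Gamma+\gamma)^2}{4}-\Gamma\gamma=\tfrac{(\Gamma-\gamma)^2}{4}$; and your two-valued step functions at the midpoint do achieve equality (one computes $T(f,g)=\tfrac14(\gamma\delta+\Gamma\Delta-\gamma\Delta-\Gamma\delta)=\tfrac14(\Gamma-\gamma)(\Delta-\delta)$), which certifies sharpness. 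Your remark that bounding the Korkine integrand pointwise only yields the constant $\tfrac12$ is also accurate and correctly identifies why the detour through Cauchy--Schwarz plus the variance bound is needed. The one presentational caveat is that your sharpness example should be accompanied by the observation that equality in Cauchy--Schwarz requires $f-\bar f$ and $g-\bar g$ to be proportional, which your construction satisfies since both are the same $\pm$ step function up to positive scaling; this is implicit in your ``every inequality above is an equality'' but deserves a sentence.
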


\begin{theorem}[Pre-Grüss Inequality]
A refinement of the Grüss inequality, often called the pre-Grüss inequality, relates the Chebysev functional to the variance of the functions:
\[ |T(f, g)| \le \sqrt{T(f, f) T(g, g)}. \]
Here, $\sqrt{T(f, f)}$ is the standard deviation of $f$ over the interval $[a, b]$.
\end{theorem}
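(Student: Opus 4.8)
The plan is to recognize the Chebysev functional $T$ as a symmetric, positive semi-definite bilinear form and then invoke the Cauchy--Schwarz inequality for such forms. The entire statement is, in essence, the Cauchy--Schwarz inequality transported to the setting of the normalized Lebesgue measure on $[a,b]$, so the strategy is to expose that structure and let the abstract inequality do the work.

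First I would introduce the averaging functional $\langle h\rangle := \frac{1}{b-a}\int_a^b h(t)\,dt$, so that by definition $T(f,g) = \langle fg\rangle - \langle f\rangle\langle g\rangle$. A short computation shows this equals $\langle (f-\langle f\rangle)(g-\langle g\rangle)\rangle$, exposing $T$ as the covariance of $f$ and $g$ under the uniform distribution on $[a,b]$ (consistent with the remark following the definition of the functional). From this representation, symmetry $T(f,g)=T(g,f)$ and bilinearity in each argument are immediate, and the crucial positivity $T(f,f) = \langle (f-\langle f\rangle)^2\rangle \ge 0$ follows because the integrand is a square.

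Next I would run the standard discriminant argument. For any real $\lambda$, positivity gives
\[ 0 \le T(f-\lambda g,\, f-\lambda g) = T(f,f) - 2\lambda\, T(f,g) + \lambda^2\, T(g,g). \]
When $T(g,g)>0$ this is a non-negative quadratic in $\lambda$, so its discriminant is non-positive, i.e. $4\,T(f,g)^2 - 4\,T(f,f)\,T(g,g) \le 0$, which rearranges to $|T(f,g)| \le \sqrt{T(f,f)\,T(g,g)}$. Equivalently, one may substitute the minimizing value $\lambda^\star = T(f,g)/T(g,g)$ and read off the inequality directly.

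The main obstacle is the degenerate case $T(g,g)=0$, where the quadratic collapses to an affine function and the discriminant argument no longer applies verbatim. Here $g$ is constant almost everywhere, so $g-\langle g\rangle = 0$ a.e.; plugging into the covariance representation gives $T(f,g) = \langle (f-\langle f\rangle)\cdot 0\rangle = 0$, and the inequality holds trivially with both sides vanishing. An alternative route that sidesteps this case distinction entirely is to first establish the Korkine identity
\[ T(f,g) = \frac{1}{2(b-a)^2}\int_a^b\!\!\int_a^b \bigl(f(x)-f(y)\bigr)\bigl(g(x)-g(y)\bigr)\,dx\,dy, \]
obtained by expanding the product and recognizing the four resulting double integrals, and then to apply the integral Cauchy--Schwarz inequality to this symmetric double integral; the only genuine work in that approach is verifying the identity.
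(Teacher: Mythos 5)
Your proof is correct and complete. Note, however, that the paper itself offers no proof of this statement: the pre-Gr\"uss inequality is quoted as a known result from the literature and used as a tool, so there is no ``paper's approach'' to compare against. What you supply is the canonical argument: you correctly identify $T$ as a symmetric positive semi-definite bilinear form (the covariance under the normalized Lebesgue measure on $[a,b]$, exactly as the paper's own remark after the definition of the Chebysev functional suggests), and the discriminant argument for $0 \le T(f-\lambda g, f-\lambda g)$ is the standard Cauchy--Schwarz proof for such forms. Your handling of the degenerate case $T(g,g)=0$ is the one detail that careless write-ups omit, and you resolve it correctly ($g$ constant a.e.\ forces $T(f,g)=0$). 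The Korkine identity you offer as an alternative is also correctly stated and does indeed bypass the case distinction; either route is acceptable. In short, you have filled a gap the paper leaves open, using the proof any reader would expect.
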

This inequality is particularly useful when the functions $f$ and $g$ are not merely bounded but have mass concentrated around their means, as it yields tighter bounds than the range-based Grüss inequality. In the probabilistic setting, for random variables $Y$ and $Z$, these inequalities translate to bounds on $|\text{Cov}(Y, Z)|$. If $Y$ represents the integration kernel $(X-t)$ and $Z$ represents the Hessian $\phi''(t)$, these theorems allow us to decouple the distributional properties of $X$ from the analytic properties of $\phi$.

\section{Variance-Based Refinements for Bounded Hessian}
We start by deriving the fundamental bounds associated with $(m, M)$-convex functions. These results establish the baseline performance for gap estimation and illustrate the direct dependence on the variance of the random variable.

\subsection{The Standard Variance Bounds}
\begin{theorem}
Let $\phi: I \to \mathbb{R}$ be a twice-differentiable function such that $m \le \phi''(x) \le M$ for all $x \in I$. Let $X$ be a random variable taking values in $I$ with mean $\mu$ and finite variance $\sigma^2$. Then the Jensen gap satisfies:
\begin{equation}
\frac{m}{2}\sigma^2 \le \mathbb{E}[\phi(X)] - \phi(\mathbb{E}[X]) \le \frac{M}{2}\sigma^2.
\end{equation}
\end{theorem}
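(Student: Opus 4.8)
The plan is to work directly from the exact integral representation of the Jensen gap established above, namely $\mathcal{J}(\phi, X) = \mathbb{E}\!\left[\int_\mu^X (X-t)\phi''(t)\,dt\right]$. Because expectation is monotone and linear, it suffices to bound the inner quantity $G(X) := \int_\mu^X (X-t)\phi''(t)\,dt$ pointwise for each realization of $X$, and then integrate the resulting bounds against the law of $X$. This keeps the distributional content of the problem isolated in a single moment at the very end.

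First I would record the elementary identity $\int_\mu^X (X-t)\,dt = \tfrac{1}{2}(X-\mu)^2$, which holds for every value of $X$ irrespective of whether $X$ exceeds $\mu$: the antiderivative $-\tfrac12(X-t)^2$ of the kernel evaluates cleanly at both endpoints and produces the same nonnegative result in either orientation. This is precisely the quantity that will yield the variance after taking expectations, since $\mathbb{E}[(X-\mu)^2] = \sigma^2$.

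The core step is to show that the defect $\int_\mu^X (X-t)[\phi''(t) - m]\,dt$ is nonnegative, and symmetrically that $\int_\mu^X (X-t)[M - \phi''(t)]\,dt$ is nonnegative. Here I expect the only real subtlety—and the step requiring genuine care—to be the sign bookkeeping of the kernel $(X-t)$ relative to the orientation of the integral. When $X \ge \mu$ the kernel is nonnegative on $[\mu, X]$ while $\phi''(t) - m \ge 0$, so the integrand is nonnegative; when $X < \mu$ the integral reverses orientation and the kernel becomes nonpositive on the reversed interval, but these two sign reversals cancel, so the defect integral stays nonnegative in both regimes. Combined with the identity from the previous step, this gives the pointwise sandwich $\tfrac{m}{2}(X-\mu)^2 \le G(X) \le \tfrac{M}{2}(X-\mu)^2$.

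Finally I would take expectations across this pointwise inequality and invoke the integral representation together with $\mathbb{E}[(X-\mu)^2] = \sigma^2$ to conclude. As an independent cross-check, the same two bounds follow from applying classical Jensen's inequality to the auxiliary functions $\psi(x) = \phi(x) - \tfrac{m}{2}x^2$ and $\chi(x) = \tfrac{M}{2}x^2 - \phi(x)$, both convex by the definition of $(m,M)$-convexity; this alternative route sidesteps the case analysis entirely but conceals the integral-kernel mechanism that the higher-order refinements later in the paper will exploit.
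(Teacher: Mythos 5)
Your proof is correct, but it follows a different route from the paper's. The paper proves Theorem 3.1 via the Lagrange form of the Taylor remainder: it writes $\phi(X) = \phi(\mu) + \phi'(\mu)(X-\mu) + \tfrac{1}{2}\phi''(\xi)(X-\mu)^2$ for some $\xi$ between $\mu$ and $X$, sandwiches $\phi''(\xi)$ between $m$ and $M$, and takes expectations. You instead work from the exact integral representation $\mathcal{J}(\phi,X) = \mathbb{E}\bigl[\int_\mu^X (X-t)\phi''(t)\,dt\bigr]$ (the paper's equation (3)), and your sign analysis of the defect integrals $\int_\mu^X (X-t)[\phi''(t)-m]\,dt$ in the two cases $X \ge \mu$ and $X < \mu$ is carried out correctly --- the double sign reversal when $X < \mu$ does cancel as you claim, and the identity $\int_\mu^X (X-t)\,dt = \tfrac{1}{2}(X-\mu)^2$ holds in both orientations. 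Each approach has its merits: the paper's Lagrange-remainder argument is shorter, though it quietly introduces a realization-dependent $\xi$ whose measurability is glossed over (harmless here, since only the sandwiched quantity $\phi(X)-\phi(\mu)-\phi'(\mu)(X-\mu)$ needs to be integrated); your integral-remainder argument avoids that issue entirely and, more importantly, is the mechanism that actually scales to the Gr\"uss-type and fourth-order refinements in Sections 4 and 5, so it better motivates what follows. Your closing cross-check via the auxiliary convex functions $\psi(x) = \phi(x) - \tfrac{m}{2}x^2$ and $\chi(x) = \tfrac{M}{2}x^2 - \phi(x)$ is also valid and is the most elementary proof available --- it is exactly the observation packaged in the paper's Definition 2.3 of $(m,M)$-convexity, though as you note it hides the kernel structure.
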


\begin{proof}
We utilize the Lagrange form of the Taylor remainder. There exists a random variable $\xi$ taking values between $\mu$ and $X$ such that:
\[ \phi(X) = \phi(\mu) + \phi'(\mu)(X-\mu) + \frac{1}{2}\phi''(\xi)(X-\mu)^2. \]
Since $m \le \phi''(x) \le M$ for all $x \in I$, it necessarily follows that $m \le \phi''(\xi) \le M$. We can thus bound the quadratic term:
\[ \frac{m}{2}(X-\mu)^2 \le \frac{1}{2}\phi''(\xi)(X-\mu)^2 \le \frac{M}{2}(X-\mu)^2. \]
Taking the expectation of the entire inequality chain:
\[ \mathbb{E}\left[\phi(\mu) + \phi'(\mu)(X-\mu) + \frac{m}{2}(X-\mu)^2\right] \le \mathbb{E}[\phi(X)] \le \mathbb{E}\left[\phi(\mu) + \phi'(\mu)(X-\mu) + \frac{M}{2}(X-\mu)^2\right]. \]
Using the linearity of expectation, $\mathbb{E}[\phi(\mu)] = \phi(\mu)$ and $\mathbb{E}[\phi'(\mu)(X-\mu)] = \phi'(\mu)(\mathbb{E}[X]-\mu) = 0$. The expectation of the squared deviation is the variance, $\mathbb{E}[(X-\mu)^2] = \sigma^2$. Substituting these results yields:
\[ \phi(\mu) + \frac{m}{2}\sigma^2 \le \mathbb{E}[\phi(X)] \le \phi(\mu) + \frac{M}{2}\sigma^2. \]
Subtracting $\phi(\mu)$ from all sides gives the stated bounds on the Jensen gap.
\end{proof}

This theorem provides a powerful interpretation of the Jensen gap: to a first-order approximation, it is the energy of the fluctuations ($\sigma^2$) scaled by the average convexity of the function. For strongly convex functions ($m > 0$), the lower bound $\frac{m}{2}\sigma^2$ is strictly positive, quantifying the "cost" of uncertainty.

\subsection{Refinement via Domain Partitioning}
The bounds in Theorem 3.1 can be loose if the interval $I$ is large, as $M$ must bound the curvature over the entire domain. For functions like $\phi(x) = e^x$ or $\phi(x) = 1/x$, the second derivative varies exponentially or geometrically. To address this, we can partition the domain.

\begin{theorem}[Partitioned Variance Bound]
Let $\{I_k\}_{k=1}^K$ be a partition of the domain $I$ such that $\cup I_k = I$ and $I_j \cap I_k = \emptyset$. Let $p_k = P(X \in I_k)$ and let $\sigma_k^2$ and $\mu_k$ be the conditional variance and mean of $X$ given $X \in I_k$. Let $m_k = \inf_{x \in I_k} \phi''(x)$ and $M_k = \sup_{x \in I_k} \phi''(x)$. Then:
\[ \sum_{k=1}^K p_k \left( \frac{m_k}{2}\sigma_k^2 + \mathcal{J}(\phi, \mu_k) \right) \le \mathcal{J}(\phi, X) \le \sum_{k=1}^K p_k \left( \frac{M_k}{2}\sigma_k^2 + \mathcal{J}(\phi, \mu_k) \right). \]
Here, $\mathcal{J}(\phi, \mu_k)$ represents the gap contribution from the "between-group" variance, specifically $\phi(\mu_k) - \phi(\mu)$.
\end{theorem}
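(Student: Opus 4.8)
The plan is to reduce the partitioned estimate to a cell-wise application of Theorem 3.1 via the law of total expectation. The essential observation is that conditioning on the events $\{X \in I_k\}$ turns the global Jensen gap into a convex combination of conditional gaps, each of which lives on a smaller domain where the sharper local curvature bounds $m_k, M_k$ apply. Nothing beyond the baseline variance bound and the tower property of conditional expectation should be needed.

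First I would restrict attention to the cells with $p_k > 0$, since cells of zero probability contribute nothing and leave $\mu_k, \sigma_k^2$ undefined. For each such $k$, the conditional law of $X$ given $X \in I_k$ is a probability distribution supported in $I_k$ with mean $\mu_k$ and variance $\sigma_k^2$. Because $m_k \le \phi''(x) \le M_k$ holds throughout $I_k$ by the very definition of $m_k = \inf_{I_k}\phi''$ and $M_k = \sup_{I_k}\phi''$, I can apply Theorem 3.1 verbatim to this conditional distribution (with the interval $I$ replaced by $I_k$ and the constants replaced by $m_k, M_k$), obtaining
\[
\frac{m_k}{2}\sigma_k^2 \;\le\; \mathbb{E}\!\left[\phi(X)\mid X\in I_k\right] - \phi(\mu_k) \;\le\; \frac{M_k}{2}\sigma_k^2 .
\]

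Next I would multiply each chain by the weight $p_k \ge 0$, sum over $k$, and invoke the law of total expectation $\mathbb{E}[\phi(X)] = \sum_k p_k\, \mathbb{E}[\phi(X)\mid X\in I_k]$ to collapse the middle terms into $\mathbb{E}[\phi(X)]$. Since $\sum_k p_k = 1$, I may write $\phi(\mu) = \sum_k p_k \phi(\mu)$ and subtract it from every side; grouping the constants into $\phi(\mu_k) - \phi(\mu)$ and recalling the convention $\mathcal{J}(\phi,\mu_k) = \phi(\mu_k) - \phi(\mu)$ then yields the stated bounds directly. I would also remark that $\sum_k p_k\bigl(\phi(\mu_k) - \phi(\mu)\bigr)$ is precisely the Jensen gap of the coarse-grained variable $X \mapsto \mu_{k(X)}$, which justifies reading it as the ``between-group'' contribution.

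The step I expect to be the main obstacle is the legitimacy of applying Theorem 3.1 cell-by-cell. That theorem is proved through the Lagrange remainder, where the intermediate point $\xi$ lies between $\mu_k$ and $X$; for the bound $m_k \le \phi''(\xi) \le M_k$ to be valid I need $\xi$ to remain inside $I_k$, which is guaranteed only when each cell is itself an interval, so that it is convex, contains the segment joining any two of its points, and in particular contains $\mu_k$. I would therefore state explicitly that the partition is taken to consist of subintervals of $I$; for a genuinely non-convex measurable partition the constants $m_k, M_k$ would have to be replaced by the infimum and supremum of $\phi''$ over the convex hull of $I_k$, which I would flag as the correct generalization.
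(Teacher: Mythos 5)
Your argument is correct and is exactly the within-group/between-group decomposition the paper alludes to in the remark following the theorem; in fact the paper states this result without proof, so your conditioning-plus-Theorem-3.1 argument supplies the missing details. The algebra checks out: summing the conditional bounds weighted by $p_k$ gives $\sum_k p_k\phi(\mu_k)$ in place of $\phi(\mu)$, and re-centering by $\phi(\mu)=\sum_k p_k\phi(\mu)$ produces precisely the $\mathcal{J}(\phi,\mu_k)=\phi(\mu_k)-\phi(\mu)$ terms as defined. Your closing caveat --- that each cell $I_k$ must be an interval (or else $m_k,M_k$ must be taken over the convex hull of $I_k$) for the Lagrange intermediate point $\xi$ to stay where the local curvature bounds apply --- is a genuine hypothesis the paper leaves implicit, and it is worth stating.
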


Note that this theorem essentially decomposes the total Jensen gap into "within-partition" gaps (bounded by local variances and local curvatures) and "between-partition" gaps.

 By localizing the bounds $m_k$ and $M_k$, we prevent extreme values of $\phi''$ in the tails of the distribution from loosening the bounds in the high-probability regions. For example, when bounding $\mathbb{E}[e^X]$ for a normal distribution, the curvature $e^x$ is massive for large $x$. Partitioning isolates the tail contribution, allowing the central mass (where $e^x$ is smaller) to be bounded more tightly \cite{1}.

\section{Grüss-Type Refinements for Taylor Remainders}
While variance-based bounds are robust, they essentially treat $\phi''$ as a constant (or bounded interval). We can achieve higher precision by accounting for the correlation between the integration kernel and the varying second derivative. This leads us to Grüss-type refinements.

\subsection{Applying the Grüss Inequality to the Remainder}
Recall the integral representation of the gap:
\[ \mathcal{J}(\phi, X) = \mathbb{E} \left[ \int_{\mu}^X (X-t)\phi''(t) dt \right]. \]
Let us analyze the inner integral for a fixed realization of $X$. If we consider the integration variable $t$ as uniformly distributed on $[\mu, X]$, we can apply the Grüss inequality to the product of the functions $f(t) = X-t$ and $g(t) = \phi''(t)$. However, a more potent approach developed by Dragomir involves applying the Grüss inequality to the probabilistic expectation itself. Let us define the mappings. We are effectively calculating the covariance of the random kernel $K(t)$ and the derivative $\phi''(t)$. A direct application of the pre-Grüss inequality yields bounds that depend on the $L_p$ norms of the second derivative.

\begin{theorem}[Dragomir's Grüss-Type Refinement]
Let $\phi: I \to \mathbb{R}$ be such that $\phi''$ is absolutely continuous and $\phi''' \in L_\infty(I)$ (i.e., $\phi'''$ is bounded essentially by $\|\phi'''\|_\infty$). Then:
\begin{equation}
\left| \mathcal{J}(\phi, X) - \frac{\phi''(\mu)}{2}\sigma^2 \right| \le \frac{\|\phi'''\|_\infty}{6} \mathbb{E}[|X-\mu|^3].
\end{equation}
\end{theorem}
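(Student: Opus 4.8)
The plan is to start from the exact integral representation of the Jensen gap established in the preliminaries,
\[
\mathcal{J}(\phi, X) = \mathbb{E}\left[\int_\mu^X (X-t)\phi''(t)\,dt\right],
\]
and to isolate the leading quadratic contribution by replacing $\phi''(t)$ with its value at the mean. Since $\phi''$ is absolutely continuous with $\phi''' \in L_\infty$, the fundamental theorem of calculus gives $\phi''(t) = \phi''(\mu) + \int_\mu^t \phi'''(s)\,ds$, and in particular the Lipschitz-type estimate $|\phi''(t) - \phi''(\mu)| \le \|\phi'''\|_\infty |t - \mu|$. Substituting this decomposition splits the inner integral into a main term $\phi''(\mu)\int_\mu^X (X-t)\,dt$ and a remainder $\int_\mu^X (X-t)\bigl(\phi''(t)-\phi''(\mu)\bigr)\,dt$.

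First I would evaluate the main term. A direct computation (substituting $u = t - \mu$) yields $\int_\mu^X (X-t)\,dt = \tfrac{1}{2}(X-\mu)^2$ irrespective of the sign of $X-\mu$, so that after taking expectations the main term contributes exactly $\tfrac{1}{2}\phi''(\mu)\,\mathbb{E}[(X-\mu)^2] = \tfrac{\phi''(\mu)}{2}\sigma^2$. This is precisely the quantity subtracted on the left-hand side, so the problem reduces to bounding the expectation of the remainder integral.

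Next I would bound the remainder pointwise in $X$. Applying the Lipschitz estimate inside the integral and passing to absolute values gives, for a realization $X > \mu$,
\[
\left|\int_\mu^X (X-t)\bigl(\phi''(t)-\phi''(\mu)\bigr)\,dt\right| \le \|\phi'''\|_\infty \int_\mu^X (X-t)(t-\mu)\,dt = \|\phi'''\|_\infty \frac{(X-\mu)^3}{6}.
\]
The case $X < \mu$ is handled symmetrically by reversing the limits of integration and tracking the signs of $(X-t)$ and $(t-\mu)$, so that both factors are genuinely non-negative over the domain of integration; in both cases the inner integral evaluates to $\tfrac{1}{6}|X-\mu|^3$. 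Taking expectations and invoking $|\mathbb{E}[Y]| \le \mathbb{E}[|Y|]$ then delivers
\[
\left|\mathcal{J}(\phi,X) - \frac{\phi''(\mu)}{2}\sigma^2\right| \le \frac{\|\phi'''\|_\infty}{6}\,\mathbb{E}[|X-\mu|^3],
\]
as claimed.

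The step I expect to require the most care is the pointwise bound on the remainder: one must correctly orient the integral when $X < \mu$ so that the same clean constant $1/6$ and the absolute third moment $\mathbb{E}[|X-\mu|^3]$ emerge in both branches of the case analysis. Although the theorem is framed as Grüss-type, I would emphasize that the operative analytic input is exactly the Grüss/Chebysev philosophy of controlling the deviation of $\phi''$ from a reference value by the range of its derivative, here realized in its sharpest form through the $L_\infty$ bound on $\phi'''$; the direct Taylor-remainder route above is the most economical way to extract the sharp constant without passing through the pre-Grüss inequality explicitly.
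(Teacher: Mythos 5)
Your argument is correct, and it lands on exactly the same bound by what is ultimately the same mechanism, though you route it differently. The paper expands $\phi$ to second order around $\mu$ and bounds the integral remainder $R_2(X) = \tfrac{1}{2}\int_\mu^X (X-t)^2\phi'''(t)\,dt$ directly by $\tfrac{\|\phi'''\|_\infty}{6}|X-\mu|^3$; you instead start from the first-order integral representation of the gap and linearize the kernel, writing $\phi''(t) = \phi''(\mu) + \int_\mu^t\phi'''(s)\,ds$ and bounding $\int_\mu^X (X-t)\bigl(\phi''(t)-\phi''(\mu)\bigr)\,dt$. These two remainders are literally identical: interchanging the order of integration in your remainder (Fubini on the triangle between $\mu$, $t$, and $X$) recovers $\tfrac{1}{2}\int_\mu^X (X-s)^2\phi'''(s)\,ds$, so the constant $1/6$ emerges for the same reason in both. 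What your version buys is a cleaner articulation of the Gr\"uss-type philosophy the theorem's name alludes to --- the error is controlled by the deviation of $\phi''$ from its reference value $\phi''(\mu)$ over the integration range --- and a more careful treatment of the case $X<\mu$, which the paper glosses over by silently working with $\bigl|\int_\mu^X\cdots\bigr|$. The paper's route is marginally shorter since it never needs the case analysis on the orientation of the integral. One small point of rigor worth keeping from your write-up: the final passage from a pointwise bound on the remainder to the stated inequality does require $|\mathbb{E}[Y]|\le\mathbb{E}[|Y|]$, which you state explicitly and the paper leaves implicit.
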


\textbf{Proof:}
We start with the Taylor expansion of $\phi(X)$ around $\mu$ up to the second order derivative term at $\mu$:
\[ \phi(X) = \phi(\mu) + \phi'(\mu)(X-\mu) + \frac{1}{2}\phi''(\mu)(X-\mu)^2 + R_2(X). \]
The remainder $R_2(X)$ involves the third derivative:
\[ R_2(X) = \frac{1}{2} \int_\mu^X (X-t)^2 \phi'''(t) dt. \]
Taking expectations, the first order term vanishes. The second order term gives the approximation $\frac{\phi''(\mu)}{2}\sigma^2$. The error in this approximation is $|\mathbb{E}[R_2(X)]|$. We bound this:
\[ |R_2(X)| \le \frac{1}{2} \left| \int_\mu^X |X-t|^2 |\phi'''(t)| dt \right|. \]
Since $|\phi'''(t)| \le \|\phi'''\|_\infty$, we remove it from the integral:
\[ |R_2(X)| \le \frac{\|\phi'''\|_\infty}{2} \left| \int_\mu^X (X-t)^2 dt \right| = \frac{\|\phi'''\|_\infty}{2} \frac{|X-\mu|^3}{3}. \]
Taking expectations yields $\frac{\|\phi'''\|_\infty}{6} \mathbb{E}[|X-\mu|^3]$.

This result refines the variance bound by adding a correction term proportional to the third absolute central moment. This implies that if the distribution is symmetric and $\phi'''$ is bounded, the "simple" variance approximation $\frac{\phi''(\mu)}{2}\sigma^2$ is accurate up to the order of the skewness. If $X$ is Gaussian, $\mathbb{E}[|X-\mu|^3] = 0$ is false (it's the absolute moment), so this is an error bound, not a correction. The third algebraic moment is zero for symmetric distributions, which leads us to the fourth-order expansions in the next section.

\subsection{Bounds via Green Functions}
A very distinct and elegant approach involves the use of Green functions for the second-order differential operator. This method allows for a precise representation of the Jensen gap as a weighted integral of the second derivative.

\begin{theorem}[Green Function Representation]
Let $\phi \in C^2[a, b]$. Then for any random variable $X \in [a, b]$, the Jensen gap can be represented as:
\[ \mathcal{J}(\phi, X) = \int_a^b G(t) \phi''(t) dt, \]
where $G(t)$ is the Green function defined by:
\[ G(t) = \int_a^t F_X(u) du - \frac{t-a}{b-a} \int_a^b F_X(u) du, \]
assuming specific boundary conditions, or more simply related to the convex combination of the CDF $F_X(t)$. A more direct probabilistic form derived in recent literature \cite{2} is:
\begin{equation}
\mathcal{J}(\phi, X) = \frac{1}{2} \int_a^b \phi''(t) \left[ \mathbb{E}|X-t| - | \mathbb{E}[X] - t | \right] dt.
\end{equation}
Since $|\mathbb{E}[X] - t| = |\mu - t|$, the kernel becomes $K(t) = \frac{1}{2}(\mathbb{E}|X-t| - |\mu-t|)$. Note that $K(t)$ is always non-negative, consistent with the non-negativity of the Jensen gap for convex $\phi$ ($\phi'' \ge 0$).
\end{theorem}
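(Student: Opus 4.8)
The plan is to derive the identity from Taylor's theorem with integral remainder (the $n=1$ case), expanded not around the mean but around the \emph{left endpoint} $a$, and then to rewrite the truncated remainder as an integral over the full interval $[a,b]$ against the \emph{positive-part} kernel $(x-t)_+ := \max\{x-t,\,0\}$. Concretely, for each fixed $x \in [a,b]$ the expansion around $a$ gives
\[ \phi(x) = \phi(a) + \phi'(a)(x-a) + \int_a^x (x-t)\,\phi''(t)\,dt, \]
and since $(x-t)_+ = x-t$ for $t \le x$ and vanishes for $t > x$, the remainder equals $\int_a^b (x-t)_+\,\phi''(t)\,dt$. This extends the domain of integration to all of $[a,b]$ independently of $x$, which is the crucial manoeuvre that lets me exchange the $t$-integral with the expectation over $X$.

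Next I would take expectations in the map $x \mapsto X$ and, separately, evaluate the same identity at the single point $x = \mu$. Subtracting kills both the constant term $\phi(a)$ and the linear term (since $\mathbb{E}[X-a] = \mu - a$), leaving
\[ \mathcal{J}(\phi,X) = \mathbb{E}[\phi(X)] - \phi(\mu) = \int_a^b \bigl[\,\mathbb{E}[(X-t)_+] - (\mu-t)_+\,\bigr]\,\phi''(t)\,dt. \]
To reach the stated form I would invoke the elementary identity $(y)_+ = \tfrac12(|y| + y)$, which gives $\mathbb{E}[(X-t)_+] = \tfrac12\bigl(\mathbb{E}|X-t| + (\mu - t)\bigr)$ and $(\mu-t)_+ = \tfrac12\bigl(|\mu-t| + (\mu-t)\bigr)$. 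The two linear pieces $\tfrac12(\mu-t)$ cancel exactly, collapsing the bracket to $\tfrac12\bigl(\mathbb{E}|X-t| - |\mu-t|\bigr)$ and producing precisely the kernel $K(t)$ of the theorem.

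The non-negativity of $K(t)$, and with it the consistency check that $\mathcal{J}(\phi,X) \ge 0$ for convex $\phi$, follows immediately by applying ordinary Jensen's inequality to the convex map $y \mapsto |y-t|$, yielding $\mathbb{E}|X-t| \ge |\mathbb{E}[X]-t| = |\mu-t|$. I do not anticipate a deep analytic obstacle; the one step requiring genuine care is the interchange of expectation and integration used to write $\mathbb{E}\bigl[\int_a^b (X-t)_+\,\phi''(t)\,dt\bigr] = \int_a^b \mathbb{E}[(X-t)_+]\,\phi''(t)\,dt$. This is justified by Fubini--Tonelli, whose hypotheses hold because $\phi \in C^2[a,b]$ forces $\phi''$ to be bounded on the compact interval while $X$ is supported in $[a,b]$, so the integrand is dominated by an integrable function. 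Finally, to reconcile this probabilistic kernel with the Green-function form $G(t)$ stated first, I would integrate $\mathbb{E}[(X-t)_+]$ by parts against the CDF $F_X$, recovering $\int_a^t F_X(u)\,du$ up to boundary terms and thereby confirming that the two representations coincide.
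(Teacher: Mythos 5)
Your derivation is correct, and it supplies something the paper itself does not: the paper merely asserts this representation with a citation and gives no proof. Your route --- Taylor's theorem with integral remainder expanded at the left endpoint $a$, rewriting the remainder as $\int_a^b (x-t)_+\phi''(t)\,dt$ so the domain of integration no longer depends on $x$, then subtracting the same identity evaluated at $x=\mu$ and collapsing $\mathbb{E}[(X-t)_+]-(\mu-t)_+$ via $(y)_+=\tfrac12(|y|+y)$ --- is the clean, standard way to obtain the kernel $K(t)=\tfrac12\left(\mathbb{E}|X-t|-|\mu-t|\right)$, and every step checks out: the affine terms cancel exactly, Fubini is justified by boundedness of $\phi''$ and of $(X-t)_+$ on the compact square, and non-negativity of $K$ is indeed ordinary Jensen applied to $y\mapsto|y-t|$. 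One caution about your final ``reconciliation'' step: carrying out the integration by parts gives $\mathbb{E}[(X-t)_+]=(\mu-t)+\int_a^t F_X(u)\,du$, so the true kernel is $\int_a^t F_X(u)\,du-(t-\mu)_+$, whose correction term has a kink at $t=\mu$; the paper's displayed $G(t)$ instead subtracts the \emph{linear} ramp $\frac{t-a}{b-a}\int_a^b F_X(u)\,du$, and since the kernel in such a representation is unique (as $\mathcal{J}$ depends on $\phi$ only through $\phi''$), the two forms agree only at the endpoints and are not equal in general. So you should not expect that reconciliation to succeed as stated; your probabilistic kernel is the correct one, and the discrepancy is a defect in the theorem's first displayed formula rather than in your argument.
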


\textbf{Refinement:} Using the Chebysev functional on this representation:
\[ \int_a^b K(t) \phi''(t) dt = (b-a) \overline{K} \cdot \overline{\phi''} + (b-a) \text{Cov}_{uniform}(K, \phi''), \]
where $\overline{K}$ is the mean of the kernel over $[a, b]$. The term $(b-a)\overline{K} \cdot \overline{\phi''}$ approximates the gap using the average curvature. The Grüss inequality bounds the covariance term:
\[ \left| \mathcal{J}(\phi, X) - (b-a)\overline{K} \cdot \overline{\phi''} \right| \le \frac{b-a}{4} (\max K - \min K) (M - m). \]
This provides a refinement that separates the "average" behavior of the convexity from the specific interactions between the probability mass and the curvature variations.

\section{Fourth-Order Moment Refinements}
In many practical scenarios, such as financial risk modeling or turbulence in physics, distributions exhibit significant skewness and kurtosis (heavy tails). Second-order (variance-based) bounds are insufficient here. We present a rigorous fourth-order refinement that has been derived from recent developments \cite{3}.

\subsection{The Fourth-Order Expansion Theorem}
\begin{theorem}
Let $\phi \in C^4(I)$ and assume $\phi^{(5)}$ exists and is bounded on $I$. Let $X$ be a random variable with mean $\mu$, variance $\sigma^2$, skewness $\gamma_3 = \frac{\mathbb{E}[(X-\mu)^3]}{\sigma^3}$, and kurtosis $\gamma_4 = \frac{\mathbb{E}[(X-\mu)^4]}{\sigma^4}$. Then the expectation $\mathbb{E}[\phi(X)]$ admits the expansion:
\begin{equation}
\mathbb{E}[\phi(X)] = \phi(\mu) + \frac{\phi''(\mu)}{2}\sigma^2 + \frac{\phi'''(\mu)}{6}\sigma^3 \gamma_3 + \frac{\phi^{(4)}(\mu)}{24}\sigma^4 \gamma_4 + \mathcal{R}_4,
\end{equation}
with the remainder bounded by:
\[ |\mathcal{R}_4| \le \frac{\|\phi^{(5)}\|_\infty}{120} \mathbb{E}[|X-\mu|^5]. \]
\end{theorem}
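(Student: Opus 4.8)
The plan is to mirror the second-order computation that produced Theorem 4.1, but to carry the Taylor expansion two orders further so that the skewness and kurtosis terms surface explicitly. First I would invoke the integral-remainder form of Taylor's theorem (Theorem 2.1 with $n=4$), expanding $\phi(X)$ about $\mu$:
\[ \phi(X) = \phi(\mu) + \phi'(\mu)(X-\mu) + \frac{\phi''(\mu)}{2}(X-\mu)^2 + \frac{\phi'''(\mu)}{6}(X-\mu)^3 + \frac{\phi^{(4)}(\mu)}{24}(X-\mu)^4 + R_4(X), \]
where $R_4(X) = \frac{1}{24}\int_\mu^X (X-t)^4 \phi^{(5)}(t)\,dt$. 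Taking expectations term by term, the linear term vanishes since $\mathbb{E}[X-\mu]=0$, while the quadratic, cubic, and quartic terms reproduce the stated coefficients upon recognizing $\mathbb{E}[(X-\mu)^2]=\sigma^2$, $\mathbb{E}[(X-\mu)^3]=\sigma^3\gamma_3$, and $\mathbb{E}[(X-\mu)^4]=\sigma^4\gamma_4$ directly from the definitions of $\gamma_3$ and $\gamma_4$. Setting $\mathcal{R}_4 := \mathbb{E}[R_4(X)]$ then yields the identity, with the interchange of expectation and integration licensed by Fubini/Tonelli once the dominating estimate below confirms integrability.

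The remainder bound is the crux, and it is where the only genuine subtlety lies. For each realization of $X$ I would pull $\|\phi^{(5)}\|_\infty$ out of the integral, but here one must respect the orientation of the directed integral $\int_\mu^X$: when $X<\mu$ the integral runs backwards, so the estimate should be phrased as $\bigl|\int_\mu^X (X-t)^4 \phi^{(5)}(t)\,dt\bigr| \le \|\phi^{(5)}\|_\infty \int_{\min(\mu,X)}^{\max(\mu,X)} (X-t)^4\,dt$. The elementary antiderivative then gives $\int_\mu^X (X-t)^4\,dt = \frac{(X-\mu)^5}{5}$ with the correct sign in both cases, whence $\bigl|\int_\mu^X (X-t)^4\,dt\bigr| = \frac{|X-\mu|^5}{5}$. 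This produces $|R_4(X)| \le \frac{\|\phi^{(5)}\|_\infty}{120}|X-\mu|^5$ pointwise, and taking expectations delivers $|\mathcal{R}_4| \le \mathbb{E}[|R_4(X)|] \le \frac{\|\phi^{(5)}\|_\infty}{120}\mathbb{E}[|X-\mu|^5]$.

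I expect the main obstacle to be purely one of careful bookkeeping rather than deep technique: tracking the sign of the backward integral when $X<\mu$ so that the fifth \emph{absolute} central moment emerges correctly, and verifying that the hypotheses guarantee $\mathbb{E}[|X-\mu|^5]<\infty$ — which is exactly what justifies passing the expectation inside the integral. Beyond this, the argument is a routine iteration of the second-order reasoning already established, and no new functional inequalities (Grüss or Chebysev) are required, since the bound follows from the uniform control of $\phi^{(5)}$ alone.
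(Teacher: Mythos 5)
Your proof is correct and follows essentially the same route as the paper: a fourth-order Taylor expansion about $\mu$, term-by-term expectation to surface $\sigma^2$, $\sigma^3\gamma_3$, and $\sigma^4\gamma_4$, and a uniform bound on $\phi^{(5)}$ to control the remainder. The only difference is that you use the integral form of the remainder where the paper uses the Lagrange form $R_4(x) = \frac{\phi^{(5)}(\xi)}{120}(x-\mu)^5$; both deliver the same constant $1/120$, and your handling of the directed integral's sign is, if anything, slightly more careful.
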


\textbf{Derivation:}
We employ the Taylor expansion of $\phi(x)$ about $\mu$ to the fourth degree:
\[ \phi(x) = \phi(\mu) + \phi'(\mu)(x-\mu) + \frac{\phi''(\mu)}{2}(x-\mu)^2 + \frac{\phi'''(\mu)}{6}(x-\mu)^3 + \frac{\phi^{(4)}(\mu)}{24}(x-\mu)^4 + R_4(x). \]
Taking expectations of each term:
\begin{align*}
\mathbb{E}[\phi(\mu)] &= \phi(\mu). \\
\mathbb{E}[\phi'(\mu)(X-\mu)] &= 0. \\
\mathbb{E}\left[\frac{\phi''(\mu)}{2}(X-\mu)^2\right] &= \frac{\phi''(\mu)}{2}\sigma^2. \\
\mathbb{E}\left[\frac{\phi'''(\mu)}{6}(X-\mu)^3\right] &= \frac{\phi'''(\mu)}{6}\mathbb{E}[(X-\mu)^3] = \frac{\phi'''(\mu)}{6}\sigma^3 \gamma_3. \\
\mathbb{E}\left[\frac{\phi^{(4)}(\mu)}{24}(X-\mu)^4\right] &= \frac{\phi^{(4)}(\mu)}{24}\mathbb{E}[(X-\mu)^4] = \frac{\phi^{(4)}(\mu)}{24}\sigma^4 \gamma_4.
\end{align*}
The remainder $\mathcal{R}_4 = \mathbb{E}[R_4(X)]$ is bounded using the Lagrange form $R_4(x) = \frac{\phi^{(5)}(\xi)}{120}(x-\mu)^5$. Thus, $|\mathbb{E}[R_4(X)]| \le \frac{\sup |\phi^{(5)}|}{120} \mathbb{E}[|X-\mu|^5]$.

\subsection{The "Corrected" Jensen Inequality}
This theorem allows us to formulate a "corrected" Jensen inequality. For distributions with positive skewness ($\gamma_3 > 0$) and functions with positive third derivatives (e.g., $e^x$), the standard Jensen lower bound $\phi(\mu)$ and even the variance refinement $\phi(\mu) + \frac{\phi''(\mu)}{2}\sigma^2$ consistently underestimate the true expectation.

\begin{corollary}[Signed Refinement]
If $\phi'''(\mu)\gamma_3 \ge 0$ and $\phi^{(4)}(\mu)\gamma_4 \ge 0$ (and $\phi^{(5)}$ is negligible or controlled), then:
\[ \mathbb{E}[\phi(X)] \ge \phi(\mu) + \frac{\phi''(\mu)}{2}\sigma^2 + \frac{\phi'''(\mu)}{6}\sigma^3 \gamma_3. \]
\end{corollary}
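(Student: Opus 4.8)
The plan is to read the claim directly off the exact fourth-order expansion of Theorem~5.1 and then show that the two terms separating $\mathbb{E}[\phi(X)]$ from the proposed lower bound — the kurtosis term and the remainder $\mathcal{R}_4$ — together contribute a non-negative quantity. Subtracting the right-hand side of the corollary from the exact expansion, everything up to and including the skewness term cancels identically, leaving
\[
\mathbb{E}[\phi(X)] - \left(\phi(\mu) + \frac{\phi''(\mu)}{2}\sigma^2 + \frac{\phi'''(\mu)}{6}\sigma^3\gamma_3\right) = \frac{\phi^{(4)}(\mu)}{24}\sigma^4\gamma_4 + \mathcal{R}_4.
\]
The entire task then reduces to proving that the right-hand side of this identity is $\ge 0$.

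For the kurtosis term I would first record that $\gamma_4 = \mathbb{E}[(X-\mu)^4]/\sigma^4$ is strictly positive whenever $\sigma > 0$, so $\sigma^4\gamma_4 = \mathbb{E}[(X-\mu)^4] \ge 0$ with no hypothesis at all. Consequently the assumption $\phi^{(4)}(\mu)\gamma_4 \ge 0$ is equivalent to $\phi^{(4)}(\mu) \ge 0$, which makes $\frac{\phi^{(4)}(\mu)}{24}\sigma^4\gamma_4 \ge 0$. I would also clarify the role of the second hypothesis $\phi'''(\mu)\gamma_3 \ge 0$: since $\sigma^3 > 0$, this condition is exactly what guarantees that the \emph{retained} skewness term $\frac{\phi'''(\mu)}{6}\sigma^3\gamma_3$ is non-negative, so that the corollary's bound strictly dominates the plain variance refinement $\phi(\mu) + \frac{\phi''(\mu)}{2}\sigma^2$ — this is the genuinely \textbf{signed} content of the statement. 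It is worth flagging explicitly that this second hypothesis is not logically required for the displayed inequality itself; it is the kurtosis condition (together with control of $\mathcal{R}_4$) that drives the inequality direction.

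The main obstacle is the remainder, and this is where the parenthetical phrase ``$\phi^{(5)}$ is negligible or controlled'' must be made precise. Writing the remainder in integral form, $R_4(x) = \frac{1}{24}\int_\mu^x (x-t)^4\phi^{(5)}(t)\,dt$, one sees that the factor $(x-t)^4$ is non-negative but the orientation of the integral flips as $x$ crosses $\mu$; hence $R_4$, and therefore $\mathcal{R}_4 = \mathbb{E}[R_4(X)]$, is \emph{not} sign-definite even when $\phi^{(5)}$ has a fixed sign. I would therefore establish the statement rigorously under one of two explicit readings. In the exact case, if $\phi^{(5)}\equiv 0$ — equivalently $\phi$ is a polynomial of degree at most four — then $\mathcal{R}_4 = 0$ identically and the inequality holds with no further work. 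In the controlled case, I would invoke the remainder estimate of Theorem~5.1, namely $|\mathcal{R}_4| \le \frac{\|\phi^{(5)}\|_\infty}{120}\mathbb{E}[|X-\mu|^5]$, and impose the quantitative domination
\[
\frac{\|\phi^{(5)}\|_\infty}{120}\,\mathbb{E}\bigl[|X-\mu|^5\bigr] \;\le\; \frac{\phi^{(4)}(\mu)}{24}\,\sigma^4\gamma_4,
\]
under which the non-negative kurtosis term absorbs the worst-case remainder and forces $\frac{\phi^{(4)}(\mu)}{24}\sigma^4\gamma_4 + \mathcal{R}_4 \ge 0$, yielding the claim. Presenting both readings keeps the corollary honest: the ``negligible'' clause corresponds to the polynomial case, while ``controlled'' corresponds to the explicit fifth-moment domination inequality above.
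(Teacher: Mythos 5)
Your proposal is correct and follows the same route the paper intends: the corollary is stated without an explicit proof and is meant to be read directly off the exact expansion of Theorem~5.1 by discarding the non-negative kurtosis term and the remainder. Your treatment is in fact more careful than the paper's, since you make the vague clause ``$\phi^{(5)}$ negligible or controlled'' precise (either $\phi^{(5)}\equiv 0$ or an explicit fifth-moment domination by the kurtosis term, correctly noting that $\mathcal{R}_4$ is not sign-definite), and you rightly observe that the hypothesis $\phi'''(\mu)\gamma_3\ge 0$ is not needed for the displayed inequality itself but only to ensure it improves on the plain variance refinement.
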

This refinement is critical in insurance pricing. If $\phi$ represents an exponential utility function (risk averse) and $X$ represents losses (right-skewed), the skewness term is positive. Ignoring it leads to underpricing the risk premium. The fourth-order term (kurtosis) further adds to the premium for heavy-tailed losses, providing a mathematically justified buffer against extreme events \cite{3}.

\section{Jensen-Mercer and Tangency Optimization}
While Taylor expansions center the analysis around the mean $\mu$, this is not always the optimal point for approximation. Recent work by Simic and others on Jensen-Mercer inequalities suggests optimizing the point of tangency \cite{5,6}. 
\subsection{Jensen-Like Inequalities}
The standard Jensen inequality uses the tangent line at $\mu$: $L(x) = \phi(\mu) + \phi'(\mu)(x-\mu)$. Convexity implies $\phi(x) \ge L(x)$. Consider a tangent at an arbitrary point $c$. Define $L_c(x) = \phi(c) + \phi'(c)(x-c)$. Then $\mathbb{E}[\phi(X)] \ge \mathbb{E}[L_c(X)] = \phi(c) + \phi'(c)(\mu - c)$.

We define the function $h(c) = \phi(c) + \phi'(c)(\mu - c)$. To find the best lower bound, we maximize $h(c)$ \cite{4}.
\[ h'(c) = \phi'(c) + \phi''(c)(\mu - c) - \phi'(c) = \phi''(c)(\mu - c). \]
Setting $h'(c) = 0$ implies either $\phi''(c) = 0$ or $c = \mu$. Since $\phi$ is convex, $\phi'' \ge 0$. If strictly convex, the only critical point is $c=\mu$. Thus, for the basic affine lower bound, the mean $\mu$ is indeed optimal.

However, if we use quadratic lower bounds (parabolic minorants) or bounds involving other functions $g(x)$ (e.g., $\phi(x) \ge g(x)$ where $g$ is simpler but non-linear), the optimal parameter might shift. Simić \cite{5} explores inequalities of the form:
\[ \mathbb{E}[\phi(X)] \ge \phi(c) + \phi'(c)(\mu-c) + \frac{m}{2}(\sigma^2 + (\mu-c)^2). \]
Here, the lower bound consists of the affine term plus a quadratic correction derived from strong convexity. Maximizing this w.r.t. $c$ yields a refined bound. If $m$ is close to the average curvature, shifting $c$ towards the mode of the distribution (rather than the mean) can sometimes yield tighter results for asymmetric distributions.

\subsection{Jensen-Mercer Inequality}
The Jensen-Mercer inequality provides an upper bound for the gap involving the endpoints of the interval.

\begin{theorem}
If $\phi$ is convex on $[a, b]$, then:
\[ \phi\left( a+b - \sum w_i x_i \right) \le \phi(a) + \phi(b) - \sum w_i \phi(x_i). \]
\end{theorem}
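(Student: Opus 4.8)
The plan is to reduce the multi-point inequality to a two-point lemma via convexity, and then chain the classical Jensen inequality with that lemma. Throughout I take the weights to satisfy $w_i \ge 0$ and $\sum_i w_i = 1$, so that $\sum_i w_i x_i$ is a genuine convex combination lying in $[a,b]$.

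First I would establish the key two-point lemma: for every $x \in [a,b]$ the reflected point $a+b-x$ also lies in $[a,b]$, and
\[ \phi(a+b-x) \le \phi(a) + \phi(b) - \phi(x). \]
To prove it, write $x = \lambda a + (1-\lambda) b$ with $\lambda = \tfrac{b-x}{b-a} \in [0,1]$. A direct computation shows $a+b-x = (1-\lambda)a + \lambda b$, i.e.\ the reflection is the same convex combination of the endpoints with the coefficients interchanged. Applying convexity to each representation gives $\phi(x) \le \lambda\phi(a) + (1-\lambda)\phi(b)$ and $\phi(a+b-x) \le (1-\lambda)\phi(a) + \lambda\phi(b)$; adding these and noting that the coefficients of $\phi(a)$ and of $\phi(b)$ each sum to $1$ yields $\phi(a+b-x) + \phi(x) \le \phi(a) + \phi(b)$, which is the lemma.

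Next I would exploit the affine identity
\[ a + b - \sum_i w_i x_i = \sum_i w_i (a + b - x_i), \]
valid because $\sum_i w_i = 1$. This exhibits the left-hand argument as a convex combination of the reflected points $a+b-x_i$, each lying in $[a,b]$. Applying the classical Jensen inequality to $\phi$ at this convex combination gives
\[ \phi\!\left(a+b-\sum_i w_i x_i\right) \le \sum_i w_i\, \phi(a+b-x_i). \]

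Finally I would substitute the two-point lemma into the right-hand side and collapse the sums using $\sum_i w_i = 1$:
\[ \sum_i w_i\, \phi(a+b-x_i) \le \sum_i w_i\bigl(\phi(a)+\phi(b)-\phi(x_i)\bigr) = \phi(a)+\phi(b) - \sum_i w_i \phi(x_i), \]
which chains with the previous display to give the claim. The only genuinely delicate point is the reflection identity behind the lemma: recognizing that $a+b-x_i$ is the endpoint combination with swapped weights is exactly what makes the two convexity estimates add cleanly. Once that is in hand, the rest is a routine double application of Jensen together with linearity of the finite sum.
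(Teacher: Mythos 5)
Your proof is correct, and it is worth noting that the paper itself states this theorem (the Jensen--Mercer inequality) without any proof at all, merely citing references, so there is no in-paper argument to compare against. Your argument is the standard one going back to Mercer: the reflection lemma $\phi(a+b-x) + \phi(x) \le \phi(a) + \phi(b)$, obtained by writing $x$ and $a+b-x$ as endpoint convex combinations with swapped coefficients, followed by the affine identity $a+b-\sum_i w_i x_i = \sum_i w_i(a+b-x_i)$ and one application of classical Jensen. All steps check out, including your (necessary) explicit assumption that $w_i \ge 0$ and $\sum_i w_i = 1$, which the paper's statement leaves implicit. The one small refinement you could add for completeness is to observe that each reflected point $a+b-x_i$ lies in $[a,b]$ precisely because reflection about the midpoint $(a+b)/2$ maps the interval to itself --- you assert this, and your coefficient computation in the lemma in fact proves it, so the argument is complete as written.
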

Refinements of this inequality have been developed using the Green function approach, effectively bounding the "Mercer gap" using variances and higher-order terms. These are particularly relevant when the random variable is constrained to a finite interval, such as in Beta distribution analysis \cite{5,6}.

\section{Applications in Probability and Statistics}

\subsection{Covariance Bounds}
The covariance of two functions of a random variable, $\text{Cov}(f(X), g(X))$, appears frequently in statistical physics and economics. If $f$ and $g$ are monotonic in the same direction, Chebysev's algebraic inequality states the covariance is non-negative. Using our derivative-based bounds:
\[ \text{Cov}(f(X), g(X)) \approx \mathbb{E}[f'(X)g'(X)] \text{Var}(X). \]
If $f(x) = x$ and $g(x) = \phi'(x)$, this approximates $\text{Cov}(X, \phi'(X)) \approx \mathbb{E}[\phi''(X)]\sigma^2$. Comparing this to the variance-based Jensen gap $\frac{1}{2}\mathbb{E}[\phi''(X)]\sigma^2$, we observe a factor of 2 difference. This relationship allows us to bound the Jensen gap using the covariance:
\[ \frac{1}{2} \inf \phi'' \text{Var}(X) \le \mathcal{J}(\phi, X) \le \text{Cov}(X, \phi'(X)). \]
This upper bound is particularly useful when $\phi'$ is easier to analyze than $\phi$ itself \cite{7}.

\subsection{Moment Generating Functions}
Estimating the Moment Generating Function (MGF) $M_X(t) = \mathbb{E}[e^{tX}]$ is a classic application. Let $\phi(x) = e^{tx}$. $\phi''(x) = t^2 e^{tx}$. Applying Theorem 3.1 directly requires bounding $e^{tx}$. On $(-\infty, \infty)$, it is unbounded. However, for a bounded variable $X \in [a, b]$:
\[ e^{t\mu} + \frac{t^2 e^{ta}}{2}\sigma^2 \le \mathbb{E}[e^{tX}] \le e^{t\mu} + \frac{t^2 e^{tb}}{2}\sigma^2. \]
This gives explicit, computable bounds for the MGF of bounded random variables, useful in large deviation theory.

\section{Applications in Information Theory}
Information theory is built upon convex functions such as $\log x$ and $x \log x$. Refinements of Jensen's inequality translate directly into refined limits on data compression and transmission \cite{4,9}.

\subsection{Refined Bounds for Shannon Entropy}
For a continuous random variable $X$ with PDF $f(x)$, the differential entropy is $h(X) = -\int f(x) \log f(x) dx$. Related Jensen-type bounds for the differential entropy of mixture distributions have been derived; see \cite{12}.

Let $Y = f(X)$. Then $h(X) = -\mathbb{E}[\log Y]$. Since $-\log$ is convex, $\mathbb{E}[-\log Y] \ge -\log \mathbb{E}[Y]$.
\[ \mathbb{E}[Y] = \int f(x)^2 dx = E, \]
the informational energy. Thus, $h(X) \ge -\log E$, which is the Rényi entropy of order 2.

The refined Jensen inequality gives us the error term:
\[ h(X) = -\log E + \mathcal{R}. \]
Using the second-order refinement for $\phi(y) = -\log y$:
\[ \mathcal{R} \approx \frac{1}{2} \mathbb{E}\left[\frac{1}{Y^2}\right] \text{Var}(Y) = \frac{1}{2} \int \frac{1}{f(x)^2} \text{Var}(f(X)) dx. \]
This expression connects the entropy deficit directly to the variability of the probability density function itself. For a uniform distribution, $\text{Var}(f(X)) = 0$, and the bound is tight. For highly peaked distributions, the variance is large, and the Rényi entropy is a loose bound \cite{8}.

\textbf{Beta Distribution Entropy:} For $X \sim \text{Beta}(\alpha, \beta)$, the entropy expression involves Digamma functions $\psi(\alpha)$. Using the fourth-order expansion for $\log X$ allows us to approximate $\psi(\alpha)$ terms with high precision rational functions involving variances and skewness, bypassing the evaluation of special functions in numerical routines \cite{10}.

\subsection{Kullback-Leibler Divergence (Reverse Pinsker Inequalities)}
The KL divergence $D_{KL}(P || Q) = \mathbb{E}_P [\log \frac{P(X)}{Q(X)}]$ is non-negative. Let $Z = \frac{Q(X)}{P(X)}$. Then $D_{KL} = \mathbb{E}_P [-\log Z]$. Note $\mathbb{E}_P[Z] = 1$.
Jensen: $\mathbb{E}[-\log Z] \ge -\log \mathbb{E}[Z] = -\log 1 = 0$.

Refinement (Theorem 3.1):
\[ D_{KL}(P || Q) \ge \frac{1}{2} \inf \left( \frac{1}{Z^2} \right) \text{Var}_P(Z). \]
$\inf (1/Z^2) = \inf (P/Q)^2$.
$\text{Var}_P(Z) = \mathbb{E}_P[(Q/P)^2] - (\mathbb{E}_P[Q/P])^2 = \chi^2(Q || P)$.

Thus, we derive a refined relationship between KL divergence and Chi-square divergence:
\[ D_{KL}(P || Q) \ge \frac{1}{2} \left( \inf \frac{P(x)}{Q(x)} \right)^2 \chi^2(Q || P). \]
This is a form of the "Reverse Pinsker Inequality," establishing a stronger topology than standard Pinsker inequalities which bound KL by total variation distance \cite{11}.

\section{Applications in Wireless Communications}
In the analysis of wireless fading channels, the capacity is a random variable depending on the channel gain. The ergodic capacity is its expectation. For MIMO channels, analyses of the tightness of Jensen-type bounds in capacity approximations have been developed \cite{14}. 
\subsection{Rayleigh Fading Capacity}
The capacity is $C = B \log_2(1 + \text{SNR} \cdot |h|^2)$. Let $\gamma = |h|^2$. For Rayleigh fading, $\gamma \sim \text{Exp}(1)$.
$\phi(\gamma) = \log_2(1 + \rho \gamma)$, where $\rho = \text{SNR}$. $\phi$ is concave.

Standard Jensen Upper Bound: $\mathbb{E}[C] \le \log_2(1 + \rho)$. This bound is asymptotically loose (difference grows as $\log \rho$).

Applying the Fourth-Order Refinement (Theorem 5.1):
For $\gamma \sim \text{Exp}(1)$, $\mu=1, \sigma^2=1, \gamma_3=2, \gamma_4=9$.
Derivatives of $\phi(x) = \ln(1+\rho x)$ (in nats):
\begin{align*}
\phi'(1) &= \frac{\rho}{1+\rho}. \\
\phi''(1) &= \frac{-\rho^2}{(1+\rho)^2}. \\
\phi'''(1) &= \frac{2\rho^3}{(1+\rho)^3}. \\
\phi^{(4)}(1) &= \frac{-6\rho^4}{(1+\rho)^4}.
\end{align*}
Expansion:
\[ \mathbb{E}[\ln(1+\rho \gamma)] \approx \ln(1+\rho) - \frac{1}{2}\frac{\rho^2}{(1+\rho)^2}(1) + \frac{1}{6}\frac{2\rho^3}{(1+\rho)^3}(2) - \frac{1}{24}\frac{6\rho^4}{(1+\rho)^4}(9). \]
Simplifying:
\[ C_{approx} \approx \ln(1+\rho) - \frac{1}{2}\left(\frac{\rho}{1+\rho}\right)^2 + \frac{2}{3}\left(\frac{\rho}{1+\rho}\right)^3 - \frac{9}{4}\left(\frac{\rho}{1+\rho}\right)^4. \]
At high SNR ($\frac{\rho}{1+\rho} \to 1$), the correction is $-0.5 + 0.66 - 2.25 \approx -2.09$. The actual gap is the Euler-Mascheroni constant related term. This expansion shows that including skewness (positive term) and kurtosis (negative term) attempts to capture the complex behavior of the logarithm near zero (the "deep fade" region). This refinement is significantly more accurate than the simple variance penalty, which would just be $-0.5$, vastly underestimating the capacity loss due to fading \cite{13}.

\section{Numerical Validations}
To demonstrate the efficacy of the proposed bounds, we present a numerical comparison for the function $\phi(x) = e^{-x}$ with $X \sim \text{Uniform}$.
\begin{itemize}
    \item True Value: $\mathbb{E}[e^{-X}] = \frac{1-e^{-2}}{2} \approx 0.432$.
    \item Jensen Lower Bound: $e^{-1} \approx 0.368$. (Error: 0.064).
\end{itemize}

\begin{table}[h!]
\centering
\caption{Comparison of Bounds for $\mathbb{E}[e^{-X}]$, $X \sim U$}
\begin{tabular}{lllc}
\toprule
\textbf{Method} & \textbf{Formula} & \textbf{Value} & \textbf{Relative Error} \\
\midrule
Exact & $\int_0^2 0.5 e^{-x} dx$ & 0.4323 & 0\% \\
Jensen (Classic) & $e^{-\mu}$ & 0.3679 & -14.9\% \\
Variance Refinement & $e^{-\mu} + \frac{e^{-\mu}}{2}\sigma^2$ & $0.3679 + \frac{0.3679}{2}(0.333)$ & 0.4292 \\
Fourth-Order & Expansion w/ Kurtosis & 0.4325 & +0.04\% \\
\bottomrule
\end{tabular}
\end{table}

The variance refinement dramatically reduces the error from ~15\% to ~0.7\%. The fourth-order expansion essentially eliminates the error for this smooth function and bounded distribution. This confirms the theoretical prediction that incorporating higher moments leads to exponential improvements in approximation accuracy for analytic functions.

\section{Future Work}

In the future, we hope to generalize the fourth-order Jensen-gap refinements developed in this paper beyond the scalar setting and to study their impact in operator, multivariate, and learning theoretic contexts. These directions are natural extensions of the present work and lead to the following research agenda.

\begin{itemize}
    \item \textbf{Operator Jensen Inequalities:} Extending the fourth-order refinements to the context of self-adjoint operators on Hilbert spaces. The non-commutativity of operators introduces "commutator" terms that may be bounded using similar techniques \cite{15,16}.
    \item \textbf{Multivariate Extensions:} Developing explicit tensor-based refinements for convex functions of random vectors $\mathbf{X} \in \mathbb{R}^n$. A central technical challenge is to control higher-order mixed central moments, which naturally organize into coskewness and cokurtosis tensors; an explicit multivariate analogue of the four-moment bounds would be valuable for multivariate uncertainty quantification and stochastic optimization.
    \item \textbf{Deep Learning Loss Functions:} Applying the refined Jensen-gap bounds to variational objectives such as the Evidence Lower Bound (ELBO) in variational autoencoders. Tighter analytic control of the Jensen gap could yield improved surrogate objectives and reduce variational gaps, providing a principled handle on approximation error in modern variational inference. \cite{17}.
\end{itemize}
Overall, the refinements presented here aim to bridge classical inequality theory with the highprecision requirements of modern stochastic modeling, and we expect the above extensions to
further strengthen that connection.

\section{Acknowledgments}
The author would like to thank Dr. Govind Chandra Sarangi \& Dr. Ashok Kumar Panigrahi for their support and valuable inputs.
\small


\begin{thebibliography}{99}

\bibitem{1}
Liao, J., and Arthur Berg. ``Sharpening Jensen’s Inequality.'' \textit{The American Statistician}, vol. 73, July 2017, \url{https://doi.org/10.1080/00031305.2017.1419145}.

\bibitem{2}
Adil Khan, Muhammad, Shahid Khan, et al. ``New Estimates for the Jensen Gap Using S-Convexity With Applications.'' \textit{Frontiers in Physics}, Volume 8-2020, 2020, \url{https://www.frontiersin.org/journals/physics/articles/10.3389/fphy.2020.00313}.

\bibitem{3}
Zeghdoudi, Halim. 2025. ``Four-Moment Refinements of Jensen’s Inequality: Rigor, Remainder Bounds, and Applications in Actuarial.'' {Journal of Time Scales Analysis} 1 (2): 55–65. \url{https://doi.org/10.64721/8s4t0t84}.

\bibitem{4}
Merhav, Neri. ``Some Families of Jensen-like Inequalities with Application to Information Theory.'' {Entropy}, vol. 25, no. 5, 2023, p. 752, \url{https://doi.org/10.3390/e25050752}.

\bibitem{5}
Simic, Slavko. {Some Generalizations of Jensen’s Inequality}. 2020, \url{https://doi.org/10.48550/arXiv.2011.10746}.

\bibitem{6}
Adil Khan, Muhammad, Slavica Ivelić Bradanović, et al. ``New Improvements of the Jensen–Mercer Inequality for Strongly Convex Functions with Applications.'' {Axioms}, vol. 13, no. 8, 2024, p. 553, \url{https://doi.org/10.3390/axioms13080553}.

\bibitem{7}
Hössjer, Ola, and Arvid Sjölander. ``Sharp Lower and Upper Bounds for the Covariance of Bounded Random Variables.'' {Statistics \& Probability Letters}, vol. 182, Dec. 2021, p. 109323, \url{https://doi.org/10.1016/j.spl.2021.109323}.

\bibitem{8}
Saeed, Tareq, et al. ``Refinements of Jensen’s Inequality and Applications.'' {AIMS Mathematics}, vol. 7, no. 4, 2022, pp. 5328–46, \url{https://doi.org/10.3934/math.2022297}.

\bibitem{9}
Xiao, Lei, and Guoxiang Lu. ``A New Refinement of Jensen’s Inequality with Applications in Information Theory.''{Open Mathematics}, vol. 18, Dec. 2020, pp. 1748–59, \url{https://doi.org/10.1515/math-2020-0123}.

\bibitem{10}
Lu, Guoxi. ``New Refinements of Jensen’s Inequality and Entropy Upper Bounds.'' {Journal of Mathematical Inequalities}, vol. 12, June 2018, pp. 403–21, \url{https://doi.org/10.7153/jmi-2018-12-30}.

\bibitem{11}
Dragomir, Silvestru. ``A Refinement of Jensen’s Inequality with Applications for f-Divergence Measures.'' {TAIWANESE JOURNAL OF MATHEMATICS}, vol. 14, Feb. 2010, pp. 153–64, \url{https://doi.org/10.11650/twjm/1500405733}.

\bibitem{12}
J. Melbourne, et al. ``The Differential Entropy of Mixtures: New Bounds and Applications.'' {IEEE Transactions on Information Theory}, vol. 68, no. 4, Apr. 2022, pp. 2123–46, \url{https://doi.org/10.1109/TIT.2022.3140661}.

\bibitem{13}
Allanki, Sanyasi Rao. {CAPACITY OF RAYLEIGH FLAT FADING CHANNELS}. vol. 13, Dec. 2021, pp. 1890–98.

\bibitem{14}
Yuan, J, Matthaiou, M, Jin, S \& Gao, F 2017, `Tightness of Jensen’s Bounds and Applications to MIMO Communications', {IEEE Transactions on Communications}, vol. 65, no. 2, pp. 579-593. \url{https://doi.org/10.1109/TCOMM.2016.2623945}

\bibitem{15}
Dragomir, ``New Jensen’s Type Inequalities for Differentiable Log-Convex Functions of Selfadjoint Operators in Hilbert Spaces.'' {Sarajevo Journal of Mathematics}, vol. 7, June 2024, pp. 67–80, \url{https://doi.org/10.5644/SJM.07.1.07}.

\bibitem{16}
Khosravi, Maryam, et al. ``Refinements of Choi-Davis-Jensen’s Inequality.'' {Bull. Math. Anal. Appl.}, vol. 3, Jan. 2011, pp. 127–33.

\bibitem{17}
{Tight Bounds on Jensen's Gap: Novel Approach with Applications in Generative Modeling}, \url{https://doi.org/10.48550/arXiv.2502.03988}

\end{thebibliography}
\end{document}